\documentclass[11pt,letterpaper]{article}

\usepackage{amsmath,amssymb,mathtools} 
\usepackage{algorithm}
\usepackage[noend]{algpseudocode}      
\algrenewcommand\algorithmiccomment[1]{\hfill\(\triangleright\) #1}

\usepackage[margin=1in]{geometry}

\usepackage{fullpage}

\usepackage[subtle,mathspacing=normal]{savetrees}
\setlength{\textheight}{8.7in}

\usepackage{url}
\usepackage{amsmath}
\usepackage{algorithm}
\usepackage[noend]{algpseudocode}
\usepackage{amssymb}
\usepackage{amsthm}
\usepackage{color}
\usepackage{array}
\usepackage{xy}
\usepackage{setspace}
\usepackage{varwidth}
\usepackage{multicol}
\usepackage{algorithm}
\usepackage{hyperref}
\usepackage[capitalize]{cleveref}
\usepackage{mathtools}
\usepackage{cite}

\usepackage{comment}


\newtheorem{theorem}{Theorem}[section]

\newtheorem{lemma}[theorem]{Lemma}

\newtheorem{claim}[theorem]{Claim}

\theoremstyle{remark}

\crefname{theorem}{Theorem}{Theorems}
\crefname{lemma}{Lemma}{Lemmas}
\crefname{claim}{Claim}{Claims}
\crefname{observation}{Observation}{Observations}

\usepackage{enumitem}
\setlist[enumerate]{label=\arabic*.}

\usepackage{thm-restate}

\begin{document}

\title{Improved Online Sorting}
 \author{Jubayer Nirjhor\thanks{\texttt{nirjhor@umich.edu}}\\University of Michigan \and Nicole Wein\thanks{\texttt{nswein@umich.edu}}\\University of Michigan}

\date{}
\maketitle

\pagenumbering{gobble}
\begin{abstract}
    We study the online sorting problem, where $n$ real numbers arrive in an online fashion, and the algorithm must immediately place each number into an array of size $(1+\varepsilon) n$ before seeing the next number. After all $n$ numbers are placed into the array, the cost is defined as the sum over the absolute differences of all $n-1$ pairs of adjacent numbers in the array, ignoring empty array cells. Aamand, Abrahamsen, Beretta, and Kleist introduced the problem and obtained a deterministic algorithm with cost $2^{O\left(\sqrt{\log n \cdot\log\log n +\log \varepsilon^{-1}}\right)}$, and a lower bound of $\Omega(\log n / \log\log n)$ for deterministic algorithms. We obtain a deterministic algorithm with quasi-polylogarithmic cost $\left(\varepsilon^{-1}\log n\right)^{O\left(\log \log n\right)}$.

    Concurrent and independent work by Azar, Panigrahi, and Vardi achieves polylogarithmic cost $O(\varepsilon^{-1}\log^2 n)$.
\end{abstract}

\clearpage

\pagenumbering{arabic}

\section{Introduction}
In the \emph{online sorting problem}, $n$ real numbers arrive in an online fashion, and the algorithm must immediately place each number into an array of size $(1+\varepsilon) n$ before seeing the next number (so that each array cell contains at most one number). After all $n$ numbers are placed into the array, the \emph{cost} is defined as the sum over the absolute difference of all $n-1$ pairs of adjacent numbers in the array, ignoring empty array cells. Formally, let $A$ be the array of size $(1+\varepsilon)n$ where we place input numbers, indexed starting from $1$, and let $1\leq i_1\leq i_2\leq \cdots\leq i_n\leq (1+\varepsilon)n$ be the subsequence of $n$ array indices that are nonempty at the end (an input number was placed there). Then we define the cost as $\sum_{k=1}^{n-1}\left|A[i_k]-A[i_{k+1}]\right|$.

It is evident that the minimum achievable cost occurs when the numbers are perfectly sorted. This natural problem was introduced by Aamand, Abrahamsen, Beretta, and Kleist~\cite{aabk} in the context of online geometric packing problems. 

It is convenient to normalize the problem to assume that the $n$ numbers come from the range $[0,1]$ and contain the numbers 0 and 1, because then the cost of the optimal offline algorithm is precisely 1, so the competitive ratio of an online algorithm is equal to its cost. We use this convention unless otherwise specified.

For the case of $\varepsilon=0$ (where the array is of size $n$, and thus completely full at the end), an asymptotically tight competitive ratio of $\Theta(\sqrt{n})$ is known, where the upper bound is a deterministic algorithm~\cite{aabk}, and the lower bound holds even for randomized algorithms~\cite{MR4802290,aabk}. 

For the general setting of $\varepsilon>0$, Aamand, Abrahamsen, Beretta, and Kleist~\cite{aabk} obtained a deterministic algorithm with cost of $2^{O\left(\sqrt{\log n\cdot \log\log n +\log \varepsilon^{-1}}\right)}$. They also obtained, for any constant $\varepsilon$, a lower bound of $\Omega(\log n / \log\log n)$ for deterministic algorithms (there is no known lower bound for randomized algorithms). This leaves a large gap between upper and lower bounds, and our work shrinks the gap.

\subsection{Our Result} We obtain an algorithm with \emph{quasi-polylogarithmic} competitive ratio:

\begin{restatable}{theorem} {mainThm}\label{mainThm}
   Given a positive integer $n$ and a real number $\varepsilon \in (0,3]$, there is a deterministic algorithm for online sorting with $n$ numbers, an array of size $(1+\varepsilon)n$, and competitive ratio $\left(\varepsilon^{-1}\log n\right)^{O(\log \log n)}$. 
\end{restatable}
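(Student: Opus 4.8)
The plan is to design a recursive algorithm whose recursion depth is $\Theta(\log\log n)$, where one level of recursion turns an instance on $N$ numbers (whose values lie in some interval, rescaled to $[0,1]$) into a collection of sub-instances on at most $\lceil\sqrt N\rceil$ numbers each, at a multiplicative and additive cost of $\mathrm{poly}(\varepsilon^{-1}\log n)$, while consuming only a $\Theta(\varepsilon/\log\log n)$ fraction of the available slack. Since applying $N\mapsto\sqrt N$ a total of $\Theta(\log\log n)$ times brings $n$ down to $O(1)$ — where online sorting costs $O(1)$ — composing the levels (the total slack spent along any root-to-leaf path of the recursion staying below $\varepsilon$) yields competitive ratio $\bigl(\mathrm{poly}(\varepsilon^{-1}\log n)\bigr)^{\Theta(\log\log n)}=(\varepsilon^{-1}\log n)^{O(\log\log n)}$.

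For one level, fix a slack parameter $\delta=\Theta(\varepsilon/\log\log n)$, partition $[0,1]$ into $m=\lceil\sqrt N\rceil$ equal subintervals $I_1<\cdots<I_m$, route each arriving number $x$ to the subinterval $I_j\ni x$, and place it recursively within the array region currently devoted to $I_j$ (a recursive online-sorting instance over $I_j$, which has width $1/m$). The immediate obstacle is that the count $c_j$ of numbers landing in $I_j$ is revealed only online and can be anything from $0$ to $N$, so no static per-subinterval block works. Instead $I_j$ is handed array space as a sequence of \emph{chunks} whose sizes grow by a factor $1+\delta$ but are capped at $\Theta(\sqrt N)$; a short calculation shows $I_j$ then uses at most $(1+\delta)c_j+O(\sqrt N)$ cells, and since only $O(\delta\sqrt N)$ subintervals ever exhaust the geometric ramp and reach the cap, these $O(\sqrt N)$ overheads total $O(\delta N)$, so all subintervals together fit in $(1+O(\delta))N$ cells and the total number of chunks is $O(\sqrt N\,\delta^{-1}\log n)$. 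Each chunk holds $O(\sqrt N)$ numbers in a common subinterval of width $1/m$, so recursively sorting a chunk costs at most $(1/m)\cdot f(\sqrt N,\delta)$ where $f$ is the recursive competitive ratio; summed over all chunks this is $O(\delta^{-1}\log n)\cdot f(\sqrt N,\delta)$, the multiplicative overhead of the level.

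The step I expect to be the main obstacle is bounding the cost of array transitions between cells of different subintervals (and different chunks). A transition between a chunk of $I_a$ and a chunk of $I_b$ costs $\Theta(|a-b|/m)$, so if chunks are laid out in an arbitrary (e.g. allocation-time) order this cost can reach $\Omega(\sqrt N)$ — catastrophically more than the $\mathrm{poly}(\varepsilon^{-1}\log n)$ we can afford per level. The fix must be to allocate chunks so that the sequence of subinterval indices read along the array is \emph{nearly sorted}, so that transition costs telescope; a natural mechanism is to organize the subintervals as the leaves of a balanced binary tree, give every node its own chunk sequence that absorbs the overflow cascading up from its subtree (a node covering $2^h$ leaves then recurses over width $2^h/m$), and lay all chunks out following the in-order traversal of the tree, so that consecutive chunks in the array have nested or adjacent value ranges. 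The difficulty — and the technical heart of the proof — is that chunk sizes, hence their array positions, are discovered online, so one cannot literally maintain the in-order layout; reconciling near-sortedness with online allocation while wasting only a $\delta$ fraction of space will require a careful amortized/potential argument showing that the total ``out-of-order'' transition cost stays $\mathrm{poly}(\varepsilon^{-1}\log n)$.

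Finally, I would assemble the bounds into a recurrence of the form $f(N)\le O(\delta^{-1}\log n)\cdot f(\sqrt N)+\mathrm{poly}(\varepsilon^{-1}\log n)$ with base case $f(O(1))=O(1)$, observe that throughout the recursion the local slack stays $\Theta(\varepsilon/\log\log n)$ and the local size never exceeds $n$, so every level's overhead is $\mathrm{poly}(\varepsilon^{-1}\log n)$, and unfold it over the $\Theta(\log\log n)$ levels to obtain the claimed competitive ratio $(\varepsilon^{-1}\log n)^{O(\log\log n)}$; the only bookkeeping care needed is that $\varepsilon$, $N$, and the subinterval width all change from level to level, but since each factor stays $\mathrm{poly}(\varepsilon^{-1}\log n)$ the product collapses to the stated form.
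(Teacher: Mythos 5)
Your high-level skeleton matches the paper's: a recursion of depth $\Theta(\log\log n)$ that shrinks the instance size each level, splits the value range into roughly $\sqrt N$ subintervals, and multiplies the competitive ratio by a $\mathrm{poly}(\varepsilon^{-1}\log n)$ factor per level. You also correctly single out the crux: the cost of array transitions between cells devoted to different subintervals can be $\Omega(\sqrt N)$ if the layout is arbitrary, which is catastrophically larger than the per-level budget. The issue is that you do not actually resolve this crux — you sketch a ``chunks laid out by in-order traversal of a balanced binary tree'' mechanism and then explicitly acknowledge that ``reconciling near-sortedness with online allocation \ldots will require a careful amortized/potential argument'' that you do not provide. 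That is exactly the gap, and it is not a small one.

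The paper's resolution is a different and cleaner idea, and it is worth knowing because it sidesteps the entire chunk/tree machinery. Instead of trying to keep the box layout nearly sorted by a bespoke allocation rule, the paper treats the ``array of boxes'' as its own online sorting instance and recursively invokes a \emph{lower-level copy of the same algorithm} ($\textsc{Sorter}_{k-4}$, the ``$\textsc{BoxSorter}$'') to decide which empty box to label with a new subinterval. The between-box transition cost is then literally the cost of that recursive online sorting instance, so it is bounded by the inductive hypothesis rather than by any ad hoc layout argument. Two further structural points where you diverge: (i) the paper's slack schedule is \emph{exponential} across levels ($2^{k+1}\delta$ rather than a fixed $\Theta(\varepsilon/\log\log n)$ slice per level), and the paper explicitly remarks that linear slack ``clashes'' with the box-sorting recursion — so your linear-budget bookkeeping would have to be redone even if you adopted their Idea~1; and (ii) the per-level size reduction is not $N\mapsto\sqrt N$ but $N\mapsto N^{\omega_{k-1}/\omega_k}$ with a bespoke integer sequence $\omega_i=\omega_{i-1}+\omega_{i-4}$, whose ratio tends to a constant around $0.725$, chosen precisely so that $\textsc{Sorter}_{k-4}$ has the right array-to-input ratio for the $\textsc{BoxSorter}$ to succeed. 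In short: your outline is a plausible blueprint, but the technical heart — keeping between-box transitions cheap while allocating online — is left open in your proposal and is solved by the paper via a self-referential recursion you did not consider, paired with an exponential slack schedule your linear budgeting would not support.
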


\noindent\textbf{Remark.} Ignoring the $\varepsilon$ dependence for simplicity, our competitive ratio can also be written as $2^{O\left((\log\log n)^2\right)}$, where the exponent $O((\log \log n)^2)$ is an exponential improvement over the exponent $O(\sqrt{\log n\cdot \log\log n})$ of prior work. Further improvement of the exponent from $O((\log \log n)^2)$ to $O(\log \log n)$ would make the final cost polylogarithmic. 

\paragraph{Concurrent and Independent Work.}

In independent work, Azar, Panigrahi, and Vardi achieve a polylogarithmic bound \cite{online} using techniques orthogonal to ours. Specifically, for any $\varepsilon \geq \Omega(\log n/n)$, they give a deterministic algorithm with competitive ratio $O(\varepsilon^{-1}\log^2 n)$. Additionally, they provide a trade-off for sparser arrays: For any $\gamma \in [O(1), O(\log^2 n)]$, for an array with $\gamma n$ space they give a deterministic algorithm with competitive ratio $O(\gamma^{-1}\log^2 n)$.

\subsection{Related Work} Several variants of online sorting have been studied, including the stochastic~\cite{MR4802290,stoch,stoch2} and high-dimensional~\cite{MR4802290} versions. Online sorting is also related to geometric packing problems: lower bounds for online sorting have been used to provide lower bounds for online packing of convex polygons~\cite{aabk}. 

Online sorting is also related to other online array maintenance problems such as \emph{list labeling}~\cite{itai1981sparse,MR3109079,MR4537273,MR4849329}, \emph{matching on the line}~\cite{MR2995327,MR3824311,MR3984884,balkanski2023power}, and \emph{B-tree leaf splitting}~\cite{MR660704}. 

\subsection{Our Techniques}
We use the techniques of~\cite{aabk} as a starting point. First we will provide a high-level outline of their algorithm, and then describe the additional ideas behind our algorithm.

\paragraph{Algorithm of~\cite{aabk}.} Divide the interval $[0,1]$ into contiguous disjoint subintervals, and divide the array into contiguous disjoint subarrays called \emph{boxes} (for this outline, we will not specify how many subintervals and boxes). For any real number $x\in[0,1]$, let $I(x)$ be the subinterval containing $x$. 

Note that after all $n$ numbers are inserted, a $1/(1+\varepsilon)$ fraction of the array cells will be occupied. When the first number $x_0$ arrives, place $x_0$ into the first box and \emph{label} the box with $I(x_0)$. In general, when a number $x$ arrives, if there is a box labeled with $I(x)$ that is below the \emph{fullness threshold} of $1/(1+\varepsilon)$ (i.e.~a $<1/(1+\varepsilon)$ fraction of array cells in the box are occupied), place $x$ into that box. 
Otherwise, place $x$ into the first completely empty box and label that box $I(x)$. The parameters (such as the size of each box, the fullness threshold of each box, and the number of subintervals) are set so that there always exists a box to place $x$.

After placing $x$ into a box, the algorithm still needs to determine which array cell within the box to place $x$. This is done recursively. 
In order to ultimately achieve fullness $1/(1+\varepsilon)$, the recursive instance within each box must run the algorithm with fullness threshold $>1/(1+\varepsilon)$. 
This is because for any box, some of its sub-boxes may be almost completely empty (e.g.~with only one occupied array cell if no numbers from that particular subinterval ever arrive again in the input sequence), and to compensate, the other boxes must be fuller, hence a higher fullness threshold. Eventually, at the bottom recursive level, the fullness threshold becomes nearly 1, and the known $\Theta(\sqrt{n})$-cost algorithm for the $\varepsilon=0$ case is applied. Ultimately, the value of $\varepsilon$ that determines the fullness threshold on each recursive level is set to increase linearly as you go up the recursive levels.

\paragraph{Our Algorithm.} Our improved algorithm has two key ideas:

\textbf{Idea 1:} The first key idea is to use \emph{another layer of recursion} to decide in which box to place each number. In the algorithm of~\cite{aabk}, when they label a box, they simply choose the first available box. Instead, our algorithm treats the ``array of boxes'' as its own recursive instance of the online sorting problem. In other words, when a number $x$ arrives, the first step is the same as the algorithm of~\cite{aabk}: if there is a box labeled $I(x)$ that is below the fullness threshold, place $x$ into that box choosing the specific array cell recursively. If there is no such box, our algorithm deviates from theirs: we recursively run the algorithm on the array of boxes to determine which empty box to place $x$, and then choose the specific array cell within the box with a second recursive call. 

\textbf{Idea 2:} The second key idea is to increase $\varepsilon$ \emph{exponentially} instead of linearly as we proceed up the recursive levels. 
This is necessary because, for technical reasons, linear scaling of $\varepsilon$ clashes with the implementation of Idea 1. 
The reason exponential scaling works better than linear is roughly as follows. In order to treat the array of boxes as a recursive instance of online sorting with $\varepsilon>0$, some of the boxes need to be completely empty. 
Intuitively, varying $\varepsilon$ exponentially allows us to designate more empty boxes in the top layers of recursion, which makes the recursive algorithm on the array of boxes have lower cost.

\section{Proof of~\cref{mainThm}}

Recall~\cref{mainThm}:


\mainThm*

The bulk of the proof will be in proving the following lemma:

\begin{lemma}\label{mainLemma}
Consider positive integers $n$, $k$, and real numbers $\delta \in (0, \frac{1}{2})$, $[\alpha, \alpha + \beta) \subseteq [0, 1]$ such that $2^{k+1}\delta \leq 3$ is satisfied, and $\delta$, $k$ may depend on $n$. Then, there exists an algorithm that solves the online sorting problem on an array of size $(1+2^{k+1}\delta)n$, over any sequence of $n$ real numbers from $[\alpha, \alpha + \beta)$, achieving a cost of $\beta n^{1/\omega_k}\delta^{-O(k)}$. Here, the integer sequence $\{\omega_i\}_{i\in \mathbb Z}$ is defined recursively by $\omega_i=\omega_{i-1}+\omega_{i-4}$ for $i\geq 2$, and $\omega_i=2$ for $-\infty < i\leq 1$. 
\end{lemma}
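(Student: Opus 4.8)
\paragraph{Overall structure.}
The plan is to prove the lemma by a nested induction that mirrors the two-layer recursion described in the techniques section. The outer layer divides the target interval $[\alpha,\alpha+\beta)$ into $m$ equal subintervals of width $\beta/m$, and divides the array into $m'$ equal-size boxes, where $m$ and $m'$ (together with a per-box fullness budget) are parameters to be tuned as functions of $n,k,\delta$. When a number $x$ arrives, I first check whether there is an already-labeled box for the subinterval $I(x)$ that is still below its fullness threshold; if so, recurse inside that box. Otherwise I must open a fresh box: rather than taking the first empty one, I run a recursive copy of the online-sorting algorithm on the ``meta-array'' of $m'$ boxes, treating each box as a single cell and the label-interval midpoint as the value to be sorted, to decide which empty box to use; then I recurse inside that box to place $x$. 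The cost will be bounded by summing three contributions: (i) the cost accrued inside each box by the interior recursion, (ii) the cost of ``seams'' between adjacent occupied boxes, which is controlled by the meta-array recursion plus the fact that any two boxes with the same label contribute only $O(\beta/m)$ per adjacency, and (iii) the $O(\beta/m)$ rounding error from collapsing each subinterval to its representative.

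\paragraph{Setting the recursion depth and the parameters.}
The index $k$ plays the role of remaining recursion depth, and the strange recurrence $\omega_i = \omega_{i-1}+\omega_{i-4}$ is exactly what drops out of the bookkeeping: the interior recursion on a box of the appropriate size is a depth-$(k-1)$ instance, while the meta-array recursion is a depth-$(k-4)$ instance (the drop of $4$ reflecting that implementing Idea~1 costs a constant number of ``levels'' worth of slack in $\varepsilon$ — this is where Idea~2, the exponential scaling $2^{k+1}\delta$, is used: passing from level $k$ to level $k-4$ must keep the $2^{j+1}\delta \le 3$ constraint intact while still leaving each sub-instance a genuinely positive $\varepsilon$). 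Concretely I expect to choose $m = m' = n^{1/\omega_k}$ (up to rounding), so that a box holds roughly $n^{1-1/\omega_k} = n^{\omega_{k-1}/\omega_k}$ numbers — wait, more precisely a box should be sized so the interior instance has $n/m'$ cells but must tolerate fullness slightly above $1/(1+\ldots)$, which is why the interior instance runs at parameter $2^{k}\delta$ rather than $2^{k+1}\delta$. The base case $k \le 1$ (where $\omega_k = 2$) is just the known deterministic $\Theta(\sqrt n)$-cost algorithm for the $\varepsilon = 0$ case from \cite{aabk}, rescaled to the interval $[\alpha,\alpha+\beta)$, giving cost $O(\beta\sqrt n) = \beta n^{1/2}\delta^{-O(1)}$; I need to double-check the constraint $2^{k+1}\delta \le 3$ leaves enough room there.

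\paragraph{The cost recurrence.}
Let $C(n,k)$ denote the cost (normalized to $\beta = 1$). The inductive hypothesis gives interior cost per box at most $(\beta/m)\cdot (n/m')^{1/\omega_{k-1}}\,\delta^{-O(k-1)}$, and with $\le m'$ occupied boxes the total interior cost is $m' \cdot (\beta/m)(n/m')^{1/\omega_{k-1}}\delta^{-O(k)}$. The seam cost: each time a box is opened, the meta-array recursion is charged; over the whole input at most $m'$ boxes are ever opened (really at most $O(m + m')$, counting re-openings forced by fullness, but this is still $\delta^{-O(1)}\cdot n^{1/\omega_k}$), and the meta-instance is an online-sorting instance on $\le m'$ items in an array of $m'$ cells with its own $\varepsilon' = 2^{k-3}\delta$, contributing $\beta \cdot (m')^{1/\omega_{k-4}}\delta^{-O(k)}$. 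Choosing $m = m' = n^{1/\omega_k}$, both $n/m' = n^{\omega_{k-1}/\omega_k}$ and $(n/m')^{1/\omega_{k-1}} = n^{1/\omega_k}$ and $(m')^{1/\omega_{k-4}} = n^{1/(\omega_k \omega_{k-4})} \le n^{1/\omega_k}$, so every term collapses to $\beta\, n^{1/\omega_k}\,\delta^{-O(k)}$, closing the induction; the $\delta^{-O(k)}$ factor absorbs the constant number of extra $\delta^{-1}$'s introduced per level (each level loses a constant factor in the fullness slack, hence a $\delta^{-O(1)}$ overhead). Finally, to deduce \cref{mainThm} one instantiates the lemma with $\delta = \varepsilon/4$ and $k = \Theta(\log\log n)$ chosen so that $\omega_k \ge \log n$ (which holds since $\omega_i$ grows like $c^{i}$ for the root $c>1$ of $x^4 = x^3 + 1$, so $\omega_{\Theta(\log\log n)}$ reaches $\log n$), making $n^{1/\omega_k} = O(1)$ and $\delta^{-O(k)} = (\varepsilon^{-1}\log n)^{O(\log\log n)}$.

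\paragraph{Main obstacle.}
The hard part will be the feasibility bookkeeping: proving that with the chosen parameters there is \emph{always} an available box to place an incoming number — i.e. that the number of distinct subintervals that can ever demand a fresh box, times the box size, plus the fullness overhead, never exceeds $(1+2^{k+1}\delta)n$. This is exactly the place where \cite{aabk} had to scale $\varepsilon$, and where Idea~1 forces a more delicate argument, because the meta-array recursion itself consumes some of the empty boxes as its own ``$\varepsilon$-slack,'' so the empty-box budget must simultaneously (a) cover labels that never get reused, (b) cover the fullness threshold exceeding $1/(1+\varepsilon)$ inside boxes, and (c) leave $2^{k-3}\delta \cdot m'$ boxes genuinely empty for the meta-instance. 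Balancing (a)–(c) is what pins down why the scaling must be exponential in $k$ rather than linear, and getting the constant in the $2^{k+1}$ (versus $2^k$, $2^{k+2}$, etc.) exactly right across the $-1$ and $-4$ index shifts is the crux of the calculation.
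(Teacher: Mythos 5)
Your high-level blueprint matches the paper's: two layers of recursion, an in-box subcall at level $k-1$, a meta-array subcall at level $k-4$, exponential scaling of $\varepsilon$, and the same decomposition of cost into inside-box and between-box terms. But your concrete parameter choice is wrong and the cost recurrence does not close with it. You set $m = m' = n^{1/\omega_k}$ and assert $n^{1-1/\omega_k} = n^{\omega_{k-1}/\omega_k}$; this identity requires $\omega_k - \omega_{k-1} = 1$, whereas the recurrence $\omega_k = \omega_{k-1} + \omega_{k-4}$ with $\omega_i \ge 2$ gives $\omega_k - \omega_{k-1} = \omega_{k-4} \ge 2$. With $m' = n^{1/\omega_k}$ the interior-cost sum works out to $\beta\, n^{(\omega_k - 1)/(\omega_k \omega_{k-1})}$, whose exponent strictly exceeds $1/\omega_k$, so the induction does not close. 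The correct choice (the one the paper uses) is $b = \ell \approx n^{\omega_{k-4}/\omega_k}$ subintervals and boxes, with each box holding $n' \approx n^{\omega_{k-1}/\omega_k}$ elements; then $(\beta/b)\cdot(n')^{1/\omega_{k-1}} \approx (\beta/b)\,n^{1/\omega_k}$, and summing over $\approx b$ boxes and combining with the meta-array cost $\beta\,\ell^{1/\omega_{k-4}} \approx \beta\,n^{1/\omega_k}$ is precisely how $\omega_k = \omega_{k-1}+\omega_{k-4}$ makes every term collapse.

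A second gap is the feasibility argument, which you correctly flag as the crux but do not resolve. Your passing remark that at most ``$O(m+m')$'' boxes are ever opened is not adequate there: for the meta-array instance of $\textsc{Sorter}_{k-4}$ to be well-defined you need the number of opened boxes to be at most $\ell/(1+2^{k-3}\delta)$, i.e.\ strictly less than $\ell$ by a multiplicative $\delta$-dependent margin, and a bound of the form $c\,\ell$ with $c>1$ would break this. The paper's Claim~\ref{boxsort} proves the sharper bound $|\mathcal S|\le n/n' + b$ and then shows $(n/n' + b)(1+2^{k-3}\delta) \le \ell$, and this computation is exactly where the constant $2^{k+1}\delta\le 3$ and the specific coefficients $2^{k-1}\delta/(1+2^{k}\delta)$ in $n'$ earn their keep. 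Without that counting argument (full boxes contribute at most $n/n'$ to $\mathcal S$, non-full boxes at most $b$), the proposal is incomplete.
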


Before proving \cref{mainLemma}, we first show that \cref{mainLemma} indeed implies \cref{mainThm}.

\begin{proof}[Proof of \cref{mainThm} assuming \cref{mainLemma}]
First, we look at the recurrence $\omega_i=\omega_{i-1}+\omega_{i-4}$ for $i\geq 2$ starting with $\omega_i=2$ for $-\infty < i\leq 1$. It's a linear recurrence with characteristic polynomial $x^4=x^3+1$. We note that, among the four complex roots of this polynomial, only the root $x\approx 1.38$ satisfies $|x|\geq 1$. So for every other root $x$, the value $x^n\to 0$ as $n\to \infty$. From this, we can deduce that $\omega_n\propto 1.38^n$ as $n\to \infty$. 

We apply \cref{mainLemma} choosing the parameters $\alpha\coloneqq 0$, $\beta\coloneqq 1$, $k\coloneqq \left\lfloor\log_{1.38} \log n\right\rfloor$, and $\delta\coloneqq \frac{\varepsilon}{2^{k+1}}$, since they satisfy $2^{k+1}\delta =\varepsilon\leq 3$ (by the assumption on the value of $\varepsilon$). It uses $(1+2^{k+1}\delta)n=(1+\varepsilon)n$ memory cells and yields a cost of \[n^{1/\omega_k}\delta^{-O(k)}=2^{O\left((\log n)/\omega_k + k\log(2^{k+1}/\varepsilon)\right)}=2^{O\left((\log n)/1.38^k+k^2+k\log\varepsilon^{-1}\right)}=(\log n)^{O(\log \log n+\log \varepsilon^{-1})}\] obtained by plugging in our chosen parameters. Using the identity $a^{\log b}=b^{\log a}$, we can infer that $(\log n)^{O\left(\log \varepsilon^{-1}\right)}=\left(\varepsilon^{-1}\right)^{O(\log \log n)}$. So the cost above is $\left(\varepsilon^{-1}\log n\right)^{O(\log \log n)}$. 
\end{proof}

With this, it remains to prove \cref{mainLemma}.

\subsection{Proof of \cref{mainLemma}}
We divide the proof into three sections. First, we present the algorithm itself. Afterwards, we prove that the proposed algorithm correctly solves the online sorting problem. Lastly, we prove that the algorithm indeed achieves a cost of $\beta n^{1/\omega_k} \delta^{-O(k)}$.

\subsubsection{The Algorithm}

We will describe the algorithm both in words and in pseudocode (Algorithm~\ref{sorter}).

\textbf{Algorithm setup and definitions.} Define $\mathcal{A}$ as the algorithm described in \cite{aabk} for online sorting with $n$ numbers on an array of size $n$, that achieves the asymptotically tight competitive ratio of $\Theta(\sqrt n)$. First off, we note that the algorithm $\mathcal{A}$ works even if the array size is bigger than $n$, as we could simply ignore the extra space. For a positive integer $k$, we will recursively define $\textsc{Sorter}_k$ as our algorithm that solves the online sorting problem with $n$ numbers on an array of size $(1+2^{k+1}\delta)n$. As $k$ gets larger, so does the amount of empty space in the array. Intuitively, this provides us with more flexibility, allowing for a smaller cost. We further define $\textsc{Sorter}_k\coloneqq \mathcal{A}$ for all integers $k\leq 1$.

We design the algorithm $\textsc{Sorter}_k$ for $k\geq 2$, by recursively using $\textsc{Sorter}_j$ for $j<k$ as subroutines. We will divide the array into $\ell$ disjoint contiguous subarrays that we call \emph{boxes}, each of \emph{width} $w$ (i.e.~each box has $w$ array cells). Each box can receive at most $n'<w$ elements of the input sequence. When we receive a particular element of the input sequence, we will invoke an instance of our algorithm to recursively determine the box where the element will go. We will also invoke an instance of our algorithm to recursively determine the exact cell in that box where the element should be placed. We define $n':=\left\lfloor\frac{2^{k-1}\delta}{1+2^k\delta}n^{\omega_{k-1}/\omega_k}\right\rfloor$ (the number of elements that go into a certain box), and $w:=\lfloor (1+2^k\delta)n'\rfloor$ (the width of each box). Then the total number of boxes that completely fit into the array is $\ell\coloneqq\left\lfloor\frac{(1+2^{k+1}\delta)n}{w}\right\rfloor$ (if the length of the array $(1+2^{k+1}\delta )n$ is not divisible by $w$, then we ignore the leftover space). For an integer $1\leq i\leq \ell$, we define the box $B_i$ as the subarray spanning the array indices from $(i-1)w$ to $iw-1$. We say that a box $B_i$ is \textit{full} if it contains exactly $n'$ elements of the input sequence. Intuitively, we call it full because we've reached the maximum number of elements we allow inserting into this particular subarray, while keeping enough empty space to recursively invoke our algorithm. 

We will also divide the interval of real numbers $[\alpha,\alpha+\beta)$ into $b\coloneqq \lfloor n^{\omega_{k-4}/\omega_k}\rfloor$ disjoint contiguous subintervals. Each box will be designated a particular subinterval of real numbers, so that only elements from that subinterval will be placed into the box. To determine which subinterval should be assigned to a particular box, we invoke our algorithm recursively. Let the algorithm $\textsc{BoxSorter}$ be a fixed instance of $\textsc{Sorter}_{k-4}$ on an array of size $\ell$. Intuitively, the $i$-th array cell here corresponds to the $i$-th box. We specifically use $\textsc{Sorter}_{k-4}$ here because, as we'll see in the analysis, this instance is enough to guarantee that all elements of the input sequence will find an array slot following our algorithm (allowing more empty space through, say, $\textsc{Sorter}_{k-3}$ or higher, does not work in our analysis). The input to $\textsc{BoxSorter}$ will be a subsequence of the original input sequence -- whenever we encounter an input from a subinterval that isn't assigned to a not-full box, we will append that element to the input sequence of $\textsc{BoxSorter}$. Then we'll insert the element into the box corresponding to the array cell where $\textsc{BoxSorter}$ places this element. In the analysis, we will show that an instance of $\textsc{Sorter}_{k-4}$ does indeed work as our $\textsc{BoxSorter}$ algorithm. 

We define the \emph{pointer function} $p:[b]\longrightarrow\mathbb{Z}$ and initialize $p(i)=-1$ for each $i\in [b]$. This function maps each of the $b$ subintervals to one of the $\ell$ boxes (with $-1$ indicating that no box has been assigned to it yet). 

\textbf{Final description of algorithm.} Given a real number $x\in [\alpha,\alpha+\beta)$, let $i\in [b]$ be the subinterval that $x$ falls into, that is, $i$ is the unique integer satisfying $(i-1)\beta \leq (x-\alpha)b < i\beta$. If $p(i)\neq -1$ and $B_{p(i)}$ is not full (meaning it has received strictly fewer than $n'$ elements), we place $x$ into $B_{p(i)}$. Otherwise, we send $x$ to $\textsc{BoxSorter}$ as an input and assign $p(i)$ to the index in $[\ell]$ where $x$ is placed by $\textsc{BoxSorter}$, then finally we place $x$ into $B_{p(i)}$. To place $x$ into an array cell inside the box $B_{p(i)}$, we run $\textsc{Sorter}_{k-1}$ on the array defined by the box and send $x$ as an input.

To summarize the recursive structure of the algorithm, $\textsc{Sorter}_k$ spawns one instance (that we call \textsc{BoxSorter}) of $\textsc{Sorter}_{k-4}$ on an array of size $\ell$, whose input values come from an interval of length $\beta$. It also spawns $\ell$ instances of $\textsc{Sorter}_{k-1}$, each on an array of size $w$ and receiving at most $n'$ elements, whose input values come from an interval of length $\frac{\beta}{b}$ (since we divide the original interval $[\alpha,\alpha+\beta)$ of length $\beta$ into $b$ subintervals of equal size). The input sequences for all these instances are subsequences of the given input sequence.

\begin{algorithm}[h!]
\caption{\textsc{Sorter}$_k$}\label{sorter}
\begin{algorithmic}[1]
\Require $k\in\mathbb{Z}$; array of size $(1+2^{k+1}\delta)n$; input interval $[\alpha,\alpha+\beta)$
\If{$k \le 1$}
  \State \Return $\mathcal{A}$ \Comment baseline online sorter from \cite{aabk}
\EndIf
\State $n' \gets \left\lfloor \dfrac{2^{k-1}\delta}{1+2^k\delta}\,n^{\omega_{k-1}/\omega_k} \right\rfloor$ \Comment{fullness of each box}
\State $w \gets \lfloor (1+2^k\delta)\,n' \rfloor$ \Comment{size of each box}
\State $\ell \gets \left\lfloor \dfrac{(1+2^{k+1}\delta)n}{w} \right\rfloor$\Comment{total number of boxes}
\State $b \gets \lfloor n^{\omega_{k-4}/\omega_k} \rfloor$ \Comment{number of subintervals}
\State $p[1..b] \gets -1$ \Comment pointer: subinterval $\to$ box
\State $count[1..\ell] \gets 0$ \Comment occupancy of each box
\State \textbf{instantiate} $\textsc{BoxSorter} \gets \textsc{Sorter}_{k-4}$ on array of size $\ell$
\For{$j \gets 1$ \textbf{to} $\ell$}
  \State \textbf{instantiate} $\textsc{InBoxSorter}[j] \gets \textsc{Sorter}_{k-1}$ on array of size $w$
\EndFor

\ForAll{arriving element $x \in [\alpha,\alpha+\beta)$}
  \State $i \gets 1 + \left\lfloor \dfrac{x-\alpha}{\beta} \cdot b\right\rfloor $ \Comment subinterval index in $[b]$
  \If{$p[i] \neq -1$ \textbf{and} $count[p[i]] < n'$}
    \State $j \gets p[i]$ \Comment reuse assigned not-full box
  \Else
    \State $j \gets \Call{\textsc{BoxSorter}}{x}$ \Comment pick a box index $j \in [\ell]$
    \State $p[i] \gets j$
  \EndIf
  \State \Call{$\textsc{InBoxSorter}[j]$}{x} \Comment place $x$ inside $B_j$
  \State $count[j] \gets count[j] + 1$
\EndFor
\end{algorithmic}
\end{algorithm}








    
        
        

\subsubsection{Proof of Correctness}
In this section, we prove, by induction on $k$, that our algorithm $\textsc{Sorter}_k$ successfully places the given input sequence of $n$ numbers into the array of size $(1+2^{k+1}\delta)n$. The base case of $k=1$ is immediate, since we define $\textsc{Sorter}_1\coloneqq \mathcal{A}$ as the algorithm described in \cite{aabk} to solve online sorting with $n$ numbers on an array of size $n$ (our array size is bigger than $n$). 

By the inductive hypothesis, for a positive integer $k\geq 2$, $\textsc{Sorter}_{k-1}$ can place $n'$ numbers into an array of size $w\coloneqq \lfloor (1+2^k \delta)n'\rfloor$. We observe that the only possible ways for us to not successfully insert an element are: either if $\textsc{BoxSorter}$ has reached the capacity (defined by a fraction of its size) and does not have a cell (corresponding to a box) to assign, or if we cannot find a cell inside the box assigned to the subinterval of the input element. But finding the cell within a particular box is handled recursively, and from the inductive hypothesis, we do know that $\textsc{Sorter}_{k-1}$ can place $n'$ elements in each box. So we only need to show that the input sequence to $\textsc{BoxSorter}$ is small enough for it to never run out of cells (corresponding to boxes) to assign. Recall that we use an instance of $\textsc{Sorter}_{k-4}$ as our $\textsc{BoxSorter}$ algorithm, so we will show that the number of elements sent to $\textsc{BoxSorter}$ respects the fraction of occupied cells required by $\textsc{Sorter}_{k-4}$. The proof follows immediately from the claim below, as we will see. 

\begin{claim}\label{boxsort}
Let $\mathcal{S}$ be the sequence of numbers we send to $\textsc{BoxSorter}$ as input. Then $|\mathcal S|\leq \frac{\ell}{1+2^{k-3}\delta}$.
\end{claim}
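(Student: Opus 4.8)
The plan is to bound $|\mathcal S|$, the number of elements routed to \textsc{BoxSorter}, by counting how many distinct ``box-assignment events'' can occur. An element $x$ is sent to \textsc{BoxSorter} exactly when its subinterval $i$ currently has no assigned not-full box --- either $p(i)=-1$ (first time we see subinterval $i$, or its previously assigned box was reassigned) or $B_{p(i)}$ is full. So I would partition $\mathcal S$ into two groups: (a) elements that trigger a brand-new box assignment because $p(i)=-1$, and (b) elements that trigger a new assignment because the box $B_{p(i)}$ became full. Each element in group (b) can be charged to the box that just filled up: a box becomes full only once (once it holds $n'$ elements it stays full), and filling it required $n'$ insertions into it, so the number of group-(b) events is at most (number of boxes that ever became full), which is at most $\ell$. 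Actually I want a sharper count: every element of $\mathcal S$ causes $p(i)$ to be set to some box index, and I should argue each box index can be ``freshly assigned'' only a bounded number of times.

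The cleaner accounting: every time we call \textsc{BoxSorter}$(x)$, it places $x$ into some box $B_j$ and we set $p(i)\gets j$. The key structural fact is that once a box $B_j$ receives its first element via such a call, all subsequent elements of that same subinterval go directly into $B_j$ (via the fast path) until $B_j$ is full --- so $B_j$ accumulates $n'$ elements before that subinterval can trigger another \textsc{BoxSorter} call. Moreover two different subintervals never share a box, because a box is labeled with exactly one subinterval the moment \textsc{BoxSorter} first places an element in it, and \textsc{BoxSorter} only ever places elements of one input-subsequence... wait, I need to double check this: \textsc{BoxSorter} is a single online-sorting instance whose input values lie in $[\alpha,\alpha+\beta)$, and it could in principle place two elements from different original-subintervals into the same box-array-cell over time. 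But each cell of \textsc{BoxSorter}'s array holds at most one element ever, so each box $B_j$ is ``claimed'' by at most one \textsc{BoxSorter}-insertion, hence by at most one subinterval. Therefore: (number of \textsc{BoxSorter} calls) $= |\mathcal S| \le$ (number of boxes ever claimed). Combined with the fact that each claimed box except possibly those claimed by elements in $\mathcal S$ "in flight" eventually holds $n'$ elements, I get $|\mathcal S|\cdot n' \le$ (number of elements placed into claimed boxes) $\le n$, i.e. $|\mathcal S| \le n/n'$.

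Then the final step is pure arithmetic: substitute $n' = \left\lfloor \frac{2^{k-1}\delta}{1+2^k\delta} n^{\omega_{k-1}/\omega_k}\right\rfloor$ and $\ell = \left\lfloor \frac{(1+2^{k+1}\delta)n}{w}\right\rfloor$ with $w = \lfloor(1+2^k\delta)n'\rfloor$, and check that $n/n' \le \frac{\ell}{1+2^{k-3}\delta}$. Roughly, $n/n' \approx \frac{1+2^k\delta}{2^{k-1}\delta} n^{\omega_k/\omega_k - \omega_{k-1}/\omega_k} = \frac{1+2^k\delta}{2^{k-1}\delta} n^{\omega_{k-4}/\omega_k}$ (using $\omega_k - \omega_{k-1} = \omega_{k-4}$), while $\ell \approx \frac{(1+2^{k+1}\delta)n}{(1+2^k\delta)n'} \approx \frac{1+2^{k+1}\delta}{2^{k-1}\delta} n^{\omega_{k-4}/\omega_k}$, so $\frac{\ell}{n/n'} \approx \frac{1+2^{k+1}\delta}{1+2^k\delta} \ge 1 + 2^{k-3}\delta$ for the relevant range of $\delta$; I expect the ratio to comfortably absorb the floor-function losses since $2^{k+1}\delta$ versus $2^k\delta$ leaves a factor-of-roughly-$\frac43$ slack against the required $1+2^{k-3}\delta$. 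The main obstacle I anticipate is \textbf{not} the arithmetic but pinning down the combinatorial invariant precisely: I must be careful that ``a box is claimed by at most one subinterval'' genuinely holds given that $p(i)$ can be overwritten and a subinterval $i$ might get re-sent to \textsc{BoxSorter} after its box fills up --- I need to confirm that when that happens \textsc{BoxSorter} assigns a \emph{fresh} box (it must, since the old box's array cell in \textsc{BoxSorter} is occupied and \textsc{BoxSorter} never reuses cells), so the charging $|\mathcal S| \le \#\{\text{claimed boxes}\}$ and $\#\{\text{claimed boxes}\}\cdot n' \le n + (\text{small correction})$ both go through. Handling the ``in-flight'' boxes that are claimed but not yet full (at most $|\mathcal S|$ of them, but each holds $\ge 1$ element) may require writing $|\mathcal S| + (n/n')\cdot(\text{something})$, so I would instead argue directly: the elements placed into boxes claimed via \textsc{BoxSorter} number at least $|\mathcal S|$ (one per call) and these all come from the $n$ total elements, but to get the stronger $n/n'$ bound I use that all but the last-claimed box per subinterval is full; I will formalize this by summing $n' \cdot (\#\text{full claimed boxes}) + (\#\text{non-full claimed boxes}) \le n$ and noting $\#\text{claimed boxes} = |\mathcal S|$ with at most $b$ non-full ones, giving $|\mathcal S| \le (n + b)/n' \cdot(1+o(1))$, then checking this still beats $\frac{\ell}{1+2^{k-3}\delta}$.
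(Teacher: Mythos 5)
Your final accounting is essentially the paper's: partition the elements of $\mathcal{S}$ according to whether the box they land in is full or not at the end, bound the first class by $n/n'$ (each full box absorbs exactly $n'$ elements and contributes at most one element to $\mathcal{S}$) and the second by $b$ (non-full boxes at the end correspond to distinct subintervals), giving $|\mathcal{S}|\le n/n'+b$, which is exactly the paper's intermediate bound. Two small points: your displayed conclusion ``$|\mathcal S|\le (n+b)/n'\cdot(1+o(1))$'' is a slip --- what your argument actually gives is $n/n'+b$, which is what you need; and you leave the closing arithmetic unchecked, where the paper uses $bw\le 2^{k-1}\delta n$ (via $\omega_k=\omega_{k-1}+\omega_{k-4}$) to reduce the claim to $(1+3\cdot 2^{k-1}\delta)(1+2^{k-3}\delta)\le 1+2^{k+1}\delta$, which holds precisely when $2^{k+1}\delta\le 3$ (so the slack is tighter than your back-of-envelope estimate, which ignored the $b$ term).
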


\begin{proof}[Proof of \cref{boxsort}]
Consider an element $x$ from the original input sequence that we include in $\mathcal S$. Depending on the type of box where $x$ is placed, we consider two cases.
\begin{enumerate}
    \item \emph{The box is full at the end.} Since a full box receives exactly $n'$ elements, the number of full boxes is at most $\frac{n}{n'}$. So the number of such elements in $\mathcal S$ is also at most $\frac{n}{n'}$ (since $\mathcal{S}$ only contains at most one element from each box).
    \item \emph{The box is not full at the end.} Since each non-full box is assigned a different subinterval, the number of such boxes at the end is at most $b$ (the number of subintervals), and so is the number of such elements in $\mathcal S$. 
\end{enumerate}
We conclude that $|\mathcal S|\leq \frac{n}{n'}+b$. Next we note that 
\begin{align*}
    bw\leq n^{\omega_{k-4}/\omega_k}\cdot (1+2^k\delta)n'\leq n^{\omega_{k-4}/\omega_k}\cdot (1+2^k\delta) \frac{2^{k-1}\delta}{1+2^k\delta} n^{\omega_{k-1}/\omega_k}=2^{k-1}\delta n
\end{align*}
using $\omega_k=\omega_{k-1}+\omega_{k-4}$. This bound helps us compute:
\begin{align*}
    \frac{|\mathcal S|(1+2^{k-3}\delta)}{\ell} &\leq \frac{(\frac{n}{n'}+b)(1+2^{k-3}\delta)}{\frac{(1+2^{k+1}\delta)n}{w}}\\ &= \frac{(\frac{n}{n'}w+bw)(1+2^{k-3}\delta)}{(1+2^{k+1}\delta)n} \\ 
    &\leq \frac{(\frac{n}{n'}(1+2^k\delta)n'+2^{k-1}\delta n)(1+2^{k-3}\delta)}{(1+2^{k+1}\delta)n}\\&=\frac{(1+3\cdot 2^{k-1}\delta )(1+2^{k-3}\delta)}{(1+2^{k+1}\delta)}.
\end{align*}
When $2^{k+1}\delta \leq 3$, this final quantity is $\leq 1$, which finishes the proof.

As a final remark, we note that we require $\varepsilon=2^{k+1}\delta \leq 3$ for the final inequality of $\leq 1$ to hold. This is why the algorithm is only defined for $\varepsilon\in(0,3]$.
\end{proof}

This claim implies that the array size $\ell$ used by $\textsc{BoxSorter}$ satisfies $\ell \geq (1+2^{k-3}\delta)|\mathcal S|$. Since $\textsc{Sorter}_{k-4}$ can, by the inductive hypothesis, solve online sorting with $|\mathcal S|$ numbers on an array of size $(1+2^{k-3}\delta)|\mathcal S|$, this means that $\ell$ is large enough as an array size to insert $|\mathcal S|$ elements using $\textsc{Sorter}_{k-4}$. So it correctly works as our $\textsc{BoxSorter}$ algorithm. 

This concludes the proof that the algorithm $\textsc{Sorter}_k$ can place $n$ numbers on an array of size $(1+2^{k+1}\delta)n$.

\subsubsection{Cost Analysis}
We are left to bound the total cost, for which we use induction. By $\text{cost}^{(k)}(r_1,\ldots,r_n)$, we denote the cost incurred by algorithm $\textsc{Sorter}_k$ when facing the sequence of real numbers $r_1,\ldots,r_n$. We prove that there exists $C>0$ such that $\text{cost}^{(k)}(r_1,\ldots,r_n)\leq \beta n^{1/\omega_k}\delta^{-C(k+1)}$ for any sequence of numbers $r_1,\ldots,r_n$ with $r_i\in [\alpha,\alpha+\beta)$. For $k=1$, the algorithm from \cite{aabk} implies that $\textsc{Sorter}_1$ places an input sequence $r_1,\ldots,r_n\in[\alpha,\alpha+\beta)$ with a cost of $\text{cost}^{(1)}(r_1,\ldots,r_n)=18\beta \sqrt n$, and we can choose $C$ accordingly. 

Now suppose $k>1$, and let $\{B_1,\ldots,B_\ell\}$ be the set of boxes in the array. First we bound the number $\ell$ of boxes, by plugging in the parameters $w=\lfloor (1+2^k\delta)n'\rfloor$, $n'=\left\lfloor\frac{2^{k-1}\delta}{1+2^k\delta}n^{\omega_{k-1}/\omega_k}\right\rfloor$, and $b=\left\lfloor n^{\omega_{k-4}/\omega_k}\right\rfloor$ into the following 
\begin{align*}
\ell & \leq \frac{(1+2^{k+1}\delta)n}{w} = \frac{(1+2^{k+1}\delta)n}{\lfloor(1+2^k\delta)n'\rfloor} \leq 2\frac{(1+2^{k+1}\delta)n}{(1+2^k\delta)n'} \leq \frac{4n}{n'} = \frac{4n}{\left\lfloor\frac{2^{k-1}\delta}{1+2^k\delta}n^{\omega_{k-1}/\omega_k}\right\rfloor} \\ &\leq \frac{8n}{\frac{2^{k-1}\delta}{1+2^k\delta}n^{\omega_{k-1}/\omega_k}}=\frac{8}{\frac{2^{k-1}\delta}{1+2^k\delta}}n^{(\omega_{k}-\omega_{k-1})/\omega_k} = \frac{1+2^k\delta}{2^{k-4}\delta }n^{\omega_{k-4}/\omega_k} \tag{since $\omega_k=\omega_{k-1}+\omega_{k-4}$}\\ 
&\leq  2\frac{1+2^k\delta}{2^{k-4}\delta }\left\lfloor n^{\omega_{k-4}/\omega_k}\right\rfloor=\frac{1+2^k\delta}{2^{k-5}\delta}b.
\end{align*} 
where we repeatedly use the fact that $x\leq 2\lfloor x\rfloor$ for all $x\geq 1$. 

We compute the following two costs separately, which add up to the total cost.
\begin{itemize}
    \item \textbf{Cost inside boxes.} Here, we focus on the cost generated by the elements inside a certain box and sum these up across all boxes. We can think of our algorithm as partitioning the sequence $r_1,r_2,\ldots,r_n$ into subsequences according to the box in which each element is placed. For $i\in[\ell]$, denote the subsequence of length $L_i$ placed in $B_i$ with $y_1^i,\ldots,y_{L_i}^i$ so that we have $\{r_1,\ldots,r_n\}=\bigcup_{i=1}^\ell\{y_1^i,\ldots,y_{L_i}^i\}$. Moreover, $L_i\leq n'$ for each $i\in[\ell]$. Define $\alpha_i'\coloneqq (i-1)\cdot \beta b^{-1}$ and $\beta_i'\coloneqq \beta b^{-1}$, then for each $j\in [L_i]$ it holds $\alpha_i'\leq y_j^i<\alpha_i'+\beta_i'$. By the induction hypothesis, the cost induced by the recursive call of $\textsc{Sorter}_{k-1}$ on box $B_i$ is bounded by 
\begin{align*} 
\text{cost}^{(k-1)}(y_1^i,\ldots,y_{L_i}^i)&\leq \beta_i'\cdot (L_i)^{1/\omega_{k-1}} \delta^{-Ck} \\ 
& \leq \beta b^{-1}\cdot (n')^{1/\omega_{k-1}} \delta^{-Ck}\\ 
& \leq \beta b^{-1}\cdot \left(\frac{2^{k-1}\delta}{1+2^k\delta}n^{\omega_{k-1}/\omega_k}\right)^{1/\omega_{k-1}} \delta^{-Ck} \\ 
& \leq \beta b^{-1}\cdot \left(\frac{2^{k-1}\delta}{1+2^k\delta}\right)^{1/\omega_{k-1}}n^{1/\omega_k}\delta^{-Ck} \\  
&\leq b^{-1}\cdot \beta n^{1/\omega_k} \delta^{-Ck}\tag{since $\frac{2^{k-1}\delta}{1+2^k\delta}<1$}
\end{align*}
Since there are at most $\ell$ boxes in total, we bound the total cost generated inside the boxes $B_i$ as \begin{align*}
\sum_{i=1}^\ell\text{cost}^{(k-1)}(y_1^i,\ldots,y_{L_i}^i) 
&\leq \ell b^{-1} \cdot \beta n^{1/\omega_k} \delta^{-Ck} \\ 
&\leq \frac{1+2^k\delta}{2^{k-5}\delta}\cdot \beta n^{1/\omega_k} \delta^{-Ck} \\ 
&=\left(2^{5-k}\delta^{-1}+8\right)\cdot \beta n^{1/\omega_k}\delta^{-Ck} \\ 
&\leq \frac{1}{2}\beta n^{1/\omega_k}\delta^{-C(k+1)} 
\end{align*}
 where the last inequality holds if we choose $C$ large enough. 

\item \textbf{Cost between boxes.} Now we estimate the cost incurred by $\textsc{BoxSorter}$ which is an instance of $\textsc{Sorter}_{k-4}$. Here, we place $|\mathcal S|$ elements from $[\alpha,\alpha+\beta)$. From \cref{boxsort} and the bound for $\ell$ above, we first note that 
\begin{align*}
    |\mathcal S|\leq \frac{\ell}{1+2^{k-3}\delta}\leq \frac{1}{2^{k-3}\delta}\frac{1+2^k\delta}{2^{k-4}\delta} n^{\omega_{k-4}/\omega_k}= \left(2^{7-2k}\delta^{-2}+2^{7-k}\delta^{-1}\right) n^{\omega_{k-4}/\omega_k}.
\end{align*}

We use the above bound on $|\mathcal S|$ and the inductive hypothesis to bound the total cost in this case as:
\begin{align*}
\text{cost}^{(k-4)}(\mathcal S)&=\beta\cdot |\mathcal S|^{1/\omega_{k-4}}\delta^{-C(k-3)} \\ &\leq \beta \cdot \left(\left(2^{7-2k}\delta^{-2}+2^{7-k}\delta^{-1}\right) n^{\omega_{k-4}/\omega_k}\right)^{1/\omega_{k-4}}\delta^{-C(k-3)} \\ &=\beta \cdot \left(2^{7-2k}\delta^{-2}+2^{7-k}\delta^{-1}\right)^{1/\omega_{k-4}}\cdot n^{1/\omega_k}\delta^{-C(k-3)} \\ &\leq \beta \cdot\sqrt{2^{7-2k}\delta^{-2}+2^{7-k}\delta^{-1}}\cdot n^{1/\omega_k}\delta^{-C(k-3)}\\ &\leq \frac{1}{2}\beta n^{1/\omega_k}\delta^{-C(k+1)}
\end{align*}
where the last inequality holds if we choose $C$ large enough. 
\end{itemize}

Summing up these two costs: \emph{cost inside boxes} and \emph{cost between boxes}, we complete the induction. This concludes the proof that the total cost incurred by $\textsc{Sorter}_k$ for online sorting with $n$ numbers on an array of size $(1+2^{k+1}\delta)n$ is bounded by $\beta n^{1/\omega_k}\delta^{-O(k)}$.

\paragraph{Acknowledgments} We would like to thank Ioana Bercea and L\'{a}szl\'{o} Kozma for introducing us to this problem at the \emph{Algorithms and Data Structures Today} workshop at the National University of Singapore.


\bibliographystyle{alpha}
\bibliography{references}

\end{document}